\theoremstyle{plain} 
\newtheorem{thm}{Theorem}[section]
\newtheorem{theorem}[thm]{Theorem}
\newtheorem{definition}[thm]{Definition}
\newtheorem{lemma}[thm]{Lemma}
\newtheorem{conventions}[thm]{Conventions}
\newtheorem{observation}[thm]{Observation}
\newtheorem{example}[thm]{Example}
\begin{document}
\xyoption{all}

\title{A Non-repetitive Logic for Verification of Dynamic Memory with Explicit Heap Conjunction and Disjunction}

\author{
  \IEEEauthorblockN{Ren\'{e} Haberland \qquad Kirill Krinkin}
  \IEEEauthorblockA{Saint Petersburg Electrotechnical University "\textit{LETI}"\\
  Saint Petersburg, Russia\\
  email: haberland1@mail.ru, kirill.krinkin@fruct.org}

}

\maketitle

\begin{abstract}
In this paper, we review existing points-to Separation Logics for dynamic memory reasoning and we find that different usages of heap separation tend to be an obstacle. Hence, two total and strict spatial heap operations are proposed upon heap graphs, for conjunction and disjunction -- similar to logical conjuncts. Heap conjunction implies that there exists a free heap vertex to connect to or an explicit destination vertex is provided. Essentially, Burstall's properties do not change.
By heap we refer to an arbitrary simple directed graph, which is finite and may contain composite vertices representing class objects. Arbitrary heap memory access is restricted, as well as type punning, late class binding and further restrictions. Properties of the new logic are investigated, and as a result group properties are shown. Both expecting and superficial heaps are specifiable. Equivalence transformations may make denotated heaps inconsistent, although those may be detected and patched by the two generic linear canonization steps presented. The properties help to motivate a later full introduction of a set of equivalences over heap for future work.
Partial heaps are considered as a useful specification technique that help to reduce incompleteness issues with specifications. Finally, the logic proposed may be considered for extension for the Object Constraint Language.
\end{abstract}

\begin{flushleft}
\textbf{Keywords.}
\textit{\textbf{heap logic, points-to heap specification and verification, spatial heap operation ambiguity.}}
\end{flushleft}

\IEEEpeerreviewmaketitle

\section{Introduction}
\label{section:Introduction}
In contrast to automatically allocated memory, which remains in the stack, \textit{dynamic memory} refers to the main memory part that is altered by commands such as \texttt{new}, \texttt{delete} and heap data assignments. The dynamic memory contains \textit{heaps} (see definition \ref{def:ReynoldsHeapDefinition}). Let us first review a few important definitions and discuss issues with heaps afterwards.

Jones  et al. \cite{Jones11} define a \textit{heap} as a contiguous subscripted datastructure, and also, alternatively, as an organised graph of ``\textit{discontiguous blocks of contiguous words}''. All allocated memory cells have a reference and the liveness of a cell is defined by its reachability. The liveness is independent from the program statement that creates a dynamic memory cell.

Reynolds \cite{reynolds02} defines \textit{heaps} (not to be mixed up with a single heap) as the union of all mappings of address subsets to non-empty value cells. Following this definition a single heap would be some addresses pointing to some arbitrary data structure. Reynolds mentions his intention goes back to Burstall's model \cite{burstall72}. Both refer to trees as implied data structure - which, at least in Burstall's proposition, denotes a simple \textit{heap graph} (definition \ref{def:finiteHeapGraphDefinition} formally introduces it, for the moment let us assume it is an arbitrary graph where edges represent some relationship between heap vertices) as for instance the expression \xymatrix{ x \ar[r] ^{a_1,a_2,a_3} & y} denotes some path within the heap graph in Fig. \ref{fig:SampleHeapGraph1}. The graph starts at $x$ and stops at a heap cell which is also pointed by some local variable $y$ by visiting $a_1$, $a_2$, $a_3$, which all may have some unspecified content on its way there. Reynolds introduces the ``$\star$''-operator for expressing that two heaps do not share common dynamic memory regions. In contrast to Burstall Reynolds' model is slightly different: all except the start of a path from a stacked variable denotes its value rather its cell location. As shown in the graph in Fig. \ref{fig:HeapGraphCactus} by convention it is agreed that stacked variables, such as $x_1$,  only have outgoing edges, where all other vertices, such as $x_2, x_3, x_4, x_5, x_6, x_7$, denote some concrete heap cell values and may have zero or more incoming and zero or more outgoing edges.

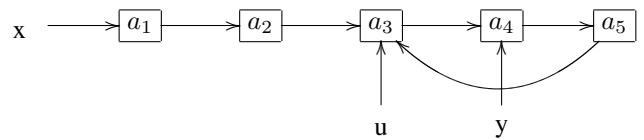
\begin{figure}[b]
\begin{xy}
\xymatrix @W=1.3pc @H=1pc @R=0pc{
  \txt{x} \ar[r] &   *+[F]\txt{$a_1$} \ar[r] & *+[F]\txt{$a_2$} \ar[r] & *+[F]\txt{$a_3$} \ar[r] & *+[F]\txt{$a_4$} \ar[r] & *+[F]\txt{$a_5$} \ar@/^2pc/[ll]\\\\
  &&& \txt{u} \ar[uu] & \txt{y} \ar[uu]
}
\end{xy}
\caption{An arbitrary annotated heap graph with locals $x$, $u$ and $y$ pointing to cells with some content $a_1$, $a_3$ and $a_4$}
\label{fig:SampleHeapGraph1}
\end{figure}

\begin{figure}[t]
\begin{displaymath}
 \xymatrix{
              &             & x_3 \ar[r] & x_4 \ar[dr] \\
   x_1 \ar[r] & x_2 \ar[ur] & x_5 \ar[r] & x_6 \ar[r] & x_7
 }
\end{displaymath}
\caption{Heap graph sample consisting of two simply linked lists forming inverted cactuses}
\label{fig:HeapGraphCactus}
\end{figure}
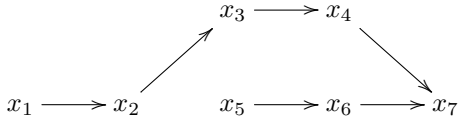

If we like to parameterise a heap graph so it contained \textit{genuine symbolic variables}, we rather have to distinguish between parameterised and fixed variable meanings on each verification step.
Reynolds introduces the ``,''-operator to specify paths in heap graphs. For example, when using ``,'' the above data-structure could be fully specified by $x_1 \mapsto x_2,x_3,x_4,x_7 \wedge x_5 \mapsto x_6,x_7$. For comparison, the same data structure without the path-operator ``,'' would be $(x_2 \mapsto x_3 \star x_4 \mapsto x_7 \star x_5 \mapsto x_6) \wedge (x_1 \mapsto x_2 \star x_3 \mapsto x_4 \star x_6 \mapsto x_7)$ -- we excuse ourselves variable locations and content were mixed up in this example for the sake of simplicity. Based on the ``$\star$''-operator and ``$\mapsto$'' the so-called \textit{Separation Logic} was proposed \cite{reynolds02},\cite{parkinsonThesis05} and implemented \cite{berdineCO05b}.
The following example \cite{reynolds02} demonstrates the definition of a binary tree predicate (we call a predicate ``\textit{abstract}'' whenever it depends on parameters):

$tree(l)::=\texttt{nil} \ | \ \exists x.\exists y:\ l \mapsto x,y \ \star \ tree(x) \ \star \ tree(y)$

The abstraction parameter $l$ in Fig. \ref{fig:HeapGraphTree} is some variable symbol and $tree$ denotes the recursively defined predicate implying the left branch does not intersect with any part of the right branch, and vice versa. However, strictly speaking this must not always be the case, since in the above specification $tree(y)$ might be substituted by $tree2(x,y)$, which could hypothetically link back to $x$ again and so would breach the convention made previously -- luckily, this can be excluded in most cases, except when references to dynamic memory are determined on runtime. For example, \texttt{p[13+offset]} where \texttt{offset} is decidable on runtime only might be such a scenario. The breach may be avoided for $tree2$ just by not passing $x$ neither recalling it globally.  Even if the tree entirely fits into dynamic memory, remember $x$ and $y$ get stacked once the tree is traversed: first $x$, then $y$ is accommodated at the next available address because of ``,''. The authors of this paper are aware of dropping unbound heap memory access may induce considerable practical restrictions, however, we think this restriction can in many cases be overcome by a modification to the chosen data model.

By convention, whenever a vertex of the heap graph has at most one outgoing edge, the heap graph is \textit{simple}, e.g. linearly-linked lists, trees and arbitrary heap graphs without multiple edges between two vertices.  W.l.o.g. we consider only pointers that refer to particular heap cells or class objects that may union several pointers to heap cells. We will further also consider abstract predicates. In order to decide whether two heaps indeed do not share a common heap, it is necessary to check there exists no path from one heap graph to the other.

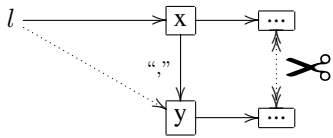
\begin{figure}[b]
\begin{displaymath}
 \xymatrix{
  l \ar[rr] \ar@{.>}[drr] && *+[F] \txt{x} \ar[d]_{``,"} \ar[r] &  *+[F] \txt{...}  \\
                                    && *+[F] \txt{y} \ar[r] & *+[F] \txt{...} \ar@{|<<.>>|}[u]_{\txt{\Huge{\Leftscissors}}}
 }
\end{displaymath}
\caption{Heap graph instance for the definition of a binary $tree$. The left $x$ child points to some content which may not interfere with the content pointed to by $y$}
\label{fig:HeapGraphTree}
\end{figure}

One alternative to Separation Logic is \textit{Shape Analysis} \cite{sagiv02}. It makes use of transfer functions in order to describe changes to the heap by every program statement. Another approach, as being demonstrated by Baby Modula 3 \cite{abadi93}, uses a class-free object calculus and a single unique result register. This register stores the result after each single statement and so allows to refer to the state before and after running a particular statement. Class-objects and their typeable theories are discussed in \cite{abadi97},\cite{abadi93}. 

At this point it is worth noting that a points-to model is considered in this paper due to its \textit{locality property} w.r.t. heap graphs, modifications do not usually require a full specification update due to its edge-based graph representation.

The inspiration for this paper, even if coming from a different context, is \cite{suzuki82}, where a rather intuitive but incomplete set of ''\textit{safe}`` symmetry operations on pointers is proposed in order to prove correctness of more complex pointer manipulations. Safe operations, as rotation or shift, raise big practical concerns as hard-to predict pointer behaviour even on very small modifications as well as incompleteness gaps on pointer rotations.

The main purpose of this paper is to present two new \textit{context-free} binary operations for heap conjunction and heap disjunction, to show group properties hold and those can be used for example for proof simplifications on proof rules in the future.

Section \ref{section:Introduction} of this paper gave a brief overview of the topic and related problems. Section \ref{section:HeapSeparatingLogic} introduces briefly Separation Logic, it introduces a concluded definition of heap and heap graph, and it comes with conventions for class objects and heap memory alignment. The main part, section \ref{sec:ConjunctionAndDisjunction} defines heap terms to be interpreted within heap formulae. Pointers of pointers and arrays are only very briefly discussed, heap conjunction is introduced for basic (''\textit{heaplet}``) and generalised heaps (what is later expressed as heap term) as well as path accessors (see observation \ref{obs:ObjectPathAccessor}). Properties of the conjunction are investigated and established, canonisation steps are demonstrated in order to overcome transient inconsistency, which may occur from references no more alive. Heap inversion is proposed as notational trick. In companion with other properties it may eventually help to define equalities over heaps and so improve the comparison with expected heaps in the future. In particular, defining a partially-ordered set over conjunct heaps w.r.t. sub-graph inclusion, and distributivity along with other properties would eventually help to define for instance a \textit{satisfiability-modulo-theory}. Partial specification is introduced in section \ref{sec:PartialSpec} for objects. Discussions propose an extension with aliases to the \textit{Object Constraint Language}. Finally, conclusions follow a short discussion.


\section{Heap Separating Logic}
\label{section:HeapSeparatingLogic}

Before going into more detail, let us first ask whether we cannot simply turn all dynamically allocated variables into stacked, as it was proposed, for instance, by \cite{meyer1-03}. Often this will indeed work, however, sometimes it is not a good idea after all, because of performance issues \cite{appel87}, for instance. More often the nature of the problem forbids general static assumptions on stack bounds. An overview and numerous definitions of dynamic memory may be found in \cite{Jones11}.

The essence of Reynolds' heap model and properties was briefly wrapped up in the previous section. So, one central problem seems to be expressibility, which is the main purpose of this paper. This section introduces a heap by referring to a graph, followed by numerous model discussions and property observations.\\

\begin{definition} (concluded from Reynolds) A heap is defined as $\bigcup_{A \subseteq Addr} A \mapsto Val^n$ with $n\ge 1$, $A$ being some address set and $Val$ is some value domain, for instance, integers or inductively defined structures containing $A$. A heap may be composed inductively by other heaps as following:

$H_1 \star H_2$, where $H_1$ describes some heap graph assertion $H_1=(V_1,E_1)$ and in analogy to that $H_2=(V_2,E_2)$, where edges $E=V \times V$ and edges are directed, s.t. iff $\forall v_1 \in V_1$, $v_2 \in V_2$ with $v1 \ne v2$ and cases:

\begin{itemize}
 \item[1st] (\textit{Separation}): $(v_1,v_2) \notin E_1$, and $(v_1,v_2) \notin E_2$.
 \item[2nd] (\textit{Conjunction}): $\exists s \in V_1,\exists t \in V_2: (s,t) \in E_1$ or $(s,t) \in E_2$ then $H_1$ or $H_2$ contains some $\star$-separated $s \mapsto t$.
\end{itemize}
\label{def:ReynoldsHeapDefinition}
\end{definition}

Variables as well as pointers are stored in the stack, where the content pointed to remains in heap memory (the following domain equation \cite{berdineCO05b} holds: $Stack=Values \cup Locals$). Heap graph assertions are assertions about the heap graph constructed by program statements manipulating the dynamic memory. Those assertions are interpreted as \textit{true} or \textit{false} depending on whether an associated concrete heap corresponds or not. Definition \ref{def:HeapTermExtendedDefinition} will introduce the syntax for heap assertions.

Regarding definition \ref{def:ReynoldsHeapDefinition} the overloading of the binary operator ''$\star$`` happens in two ways: one is to express two heaps do not overlap, and the second way is to express two heaps are somehow linked together by using transient symbols. The ''$\star$``-operator is a logical and spatial conjunction, it links two prepositions about heaps together and it describes heap entities which have some configuration in space, both consume different dynamic memory regions. On the one hand, if we link strictly two separate heaps then we have to find a maximal matching in order to describe the given \textit{heap graph} entirely, which is impractical. On the other hand, separation also seems to be a very elegant way to separate specification concerns locally: if there is an assertion regarding a particular data structure in heap, this should involve at most only that data structures and exclude unaffected ones. After all, the above initial definition seems complicated enough, because it is ambiguous and it refers to a single heap definition, which should ideally not be too different from Reynolds' initial and rather intuitive definition of heaps -- but as we have seen unfortunately, it is.
So, two strict operators would be desired rather a single ''$\star$``, one operator to strictly separate and one to join heaps. Heaps shall be replaceable with symbolic placeholders in order to beat ambiguity whilst analysing verification conditions. Moreover, syntax and semantic intention of heap expressions shall be unified.

Once both heap operators are defined, properties and equivalences can be established separately. Finally, heap theories and term algebras may eventually be proposed in future over both heap operators. In definition \ref{def:finiteHeapGraphDefinition} we first need to formally define what a heap actually is.

The underpinning theory behind \cite{burstall72} is the so-called \textit{Substructural Logic} \cite{dosen93}, which is a higher-order logic, a logic where, for instance, the contraction rule does no more hold, constants have in fact turned into predicates that may be quantified (details can be found in \cite{dosen93}). Contraction-freeness \cite{restall94} in this context has for our purpose the advantage of non-repeating heap entities within a heap assertion. By repeating we directly refer to points-to expressions as defined later.

\begin{definition}
A \textit{(finite) heap graph} is a directed connected graph within the dynamic memory section which may contain cycles, but must remain simple. Each vertex has a type-dependent size and an unique memory address, but may not overlap with other vertices. Every edge represents a relationship, for instance, a pointer to some absolute memory address or a relative jump field displacement.
\label{def:finiteHeapGraphDefinition}
\end{definition}

The absolute addresses are out of interest to the verification. The heap graph must be pointed by at least one stacked variable, otherwise the so-defined graph is considered as garbage. Stacked variables may also point to one vertex, in this case all except one variable are \textit{aliases} of the variable considered.

The emphasis lies on finite, since only arbitrary big but finite address space shall be considered. The dynamic memory shall be linearly addressable, however some operations \texttt{new} and \texttt{delete} shall organise themselves how and where to allocate or free heap memory. We restrict ourselves pointers address \textit{correctly} and \textit{sound}, and for the purpose of this paper we neither care too much about an optimal memory coalescing strategy to pointers that is most likely expanded on runtime, nor primarily about garbage collection issues. What we concern about is only that there is a relationship between a pointer and a pointed-by region within the heap memory region, it does not even require a pointer contains an absolute address within the dynamic memory range as it is not the case with bi-directional XOR-calculated jump-fields, which are relative pointer offsets depending on the address provided ''somehow`` by an actual vertex address.

\begin{figure}[b]
\begin{center}
\begin{tabular}{ccc}
 \includegraphics[width=1.7cm]{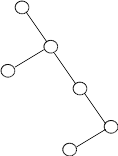} && \includegraphics[width=1.7cm]{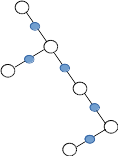}\\
 a)   && b) 
\end{tabular}
\end{center}
\caption{Schematic heap graph a) without direction, b) midpoints represent the substituted graph obtained by encoding source and destination vertices}
\label{fig:InvertedVertexGraph}
\end{figure}

\begin{conventions} Objects are restricted w.l.o.g. to be
\begin{itemize}
 \item[a)] non-inner objects only. Inner objects may always be modelled as with associated outer objects, so that there are references to different locations rather than all objects being accommodated within one contiguous heap chunk.
 
 \item[b)] Object fields and method names have to be all unique w.r.t. to its visibility. W.l.o.g. clashing names, for instance inherited names, are resolved by mangling the origin name with visibility mode and information from the deriving class. All references to mangled names need to be taken into account. This task is primarily done during the semantic program analysis phase.
 
 \item[c)] Due to encapsulation, objects do not grow in size normally, and due to late binding object references may keep invariant. However, the size of an object may spontaneously change. Sub-class objects may suddenly grow, but they may also shrink, depending on whether the translating compilation phase does align memory for non-used fields or not. If choosing a forced stack-allocated memory alignment for objects, then an object which is bound lately and passed alone to some procedure may better be reordered within its memory chunk, s.t. the growing part rises upwards on the stack.
 Because the separating heap are non-contractive \cite{burstall72}, object fields specified once may not appear again within a conjuncted heap expression.
  
 \item[d)] Arrays as base type are currently ignored. Multiple edges between the same two vertices are disallowed.
 
 \item[e)] Sharing of same heap cells by multiple cells is allowed to all object fields.
\end{itemize}
\label{conv:RestrictedObjects}
\end{conventions}

In order to stay consistent with the following definitions, a simple check for the incidence relationship for memory cells for a given heap graph needs to be introduced. A given heap graph is composed of points-to heaplets meaning a directed edge represents a location points to a heap address which contains some value. The check requires all these heaplet conjuncts are traversed and the desired heap graph shall be in an edge-centric representation.
However, whenever we like to determine if two heap vertices are incident with each other or not, we prefer a vertex-centric model encoding edges. So, we define the following built-in predicates: \texttt{reach\-es(x,y)}, \texttt{reach\-es(x,Y)}, \texttt{reach\-es(X,y)}, \texttt{reach\-es(X,Y)}, where \texttt{x} is a vertex and \texttt{X} denotes a vertex subset of a given heap graph (\texttt{y}, \texttt{Y} in analogy).

Both interleaved representations in Fig. \ref{fig:InvertedVertexGraph} (the vertex-centric representation is marked by smaller filled midpoints on every edge) are dual and convertible to each other. Midpoints encode \textit{source} and \textit{destination} as one vertex and link with former neighbouring vertices. Naturally, this mapping is bijective (proof skipped). Since we in general need to interpret predicates of at least first order, we could do this now by describing one heap graph by one conjunction of ''$loc \mapsto val$`` pairs rather than more complex forms.

\begin{conventions} (Heap Alignment)
\textit{Object fields} do not overlap, fields have distinguishing memory addresses. Pointers to objects and its fields may alias. An object is expressed commonly by the points-to expression $x \mapsto object(fld_1,fld_2, ...)$. It is agreed w.l.o.g. that object fields may not be accessed via arithmetic displacements but only by a valid object access path using the ''.``-operator and valid subordinate fields. W.l.o.g, but still for the sake of full computability late binding is skipped. A full support would imply the use of only the weakest common heap to be chosen.
\label{conv:HeapAlignment}
\end{conventions}


\section{Conjunction and Disjunction}
\label{sec:ConjunctionAndDisjunction}

Because of definitions \ref{def:ReynoldsHeapDefinition}, \ref{def:finiteHeapGraphDefinition} and conventions \ref{conv:RestrictedObjects}, \ref{conv:HeapAlignment} we describe a heap now by a term as following.

\begin{definition}
A \textit{heap term} describes heap graphs and is syntactically defined as:

\begin{tabular}{lll}
 $T::=$ & $loc \mapsto val$ \qquad \qquad & ... \textit{points-to heaplet} \\
        & $| \ T \circ T$ \qquad \qquad & ... \textit{heap conjunction} \\
        & $| \ T \ \| \ T$ \qquad \qquad & ... \textit{heap disjunction}\\
        & $| \ \underline{true} \ | \ \underline{false} \ | \ \underline{emp}$ & ... \textit{partial heap spec}\\
        & $| \ ( \ T \ )$ \qquad \qquad & ... \textit{subterm expression}
\end{tabular}

where $loc$ denotes a variable location, a location of a compound object field or a symbol representing some heap variable, and $val$ denotes some value of any arbitrary domain. $T$ describes the current heap state.
\label{def:HeapTermDefinition}
\end{definition}

$T$ can be considered as a formula since we do not restrict ourselves in not considering variable scopes as long as the syntax definition is obeyed. We further agree on that $\circ$ has lower precedence than $\|$, so $a_1 \circ a_2 \| a_3 \equiv (a_1 \circ a_2) \| a_3$. For the sake of notational simplicity, we do refer to $loc \mapsto val$, which besides is closer to Reynolds' definition rather than Burstall's. However, we really should better refer in practice to the address of the content being pointed to rather than the direct domain value, which naturally may be composed. Hence, we agree without any further notice on some polymorphic notational helpers, like $address(val)$, which will allow us to address given values.

$\underline{true}$ implies certain (remaining) heap term(s) is a tautology, regardless of the actual term(s). In analogy to that stands $\underline{false}$, which implements a contradiction. $\underline{emp}$ is a constant built-in predicate implying a given heap must be empty to be satisfiable. This is why all three may be used to consume all not explicitly listed $\circ$-conjuncted heaplets.
The partial specification we get allows us to keep heap formulae simple, since we now may implicitly include all unaffected, but still intended, heaps belonging together. Partial specifications together with abstract predicates raise modularity. 
This becomes particularly of interest for class objects, where all field heaplets generate its own heaplet scope, which is different from the global non-object scope (see convention \ref{conv:HeapAlignment}). In fact we just discussed extensions to our heap term definition, which ought to be summarised: 

\begin{definition}
Extended heap terms $ET$ are heap terms with constant formulae, negation and logical conjuncts:

\begin{tabular}{lll}
 $ET::=$ & $T$ \qquad \qquad & ... \textit{heap term}\\
         & $| \ p(\alpha)$ \qquad \qquad & ... \textit{abstract predicate call}\\
         & $| \neg ET$ & ... logical negation\\
         & $| \ ET \wedge ET$ & ... logical conjunction\\
         & $| \ ET \vee ET$ & ... logical disjunction
\end{tabular}
\label{def:HeapTermExtendedDefinition}
\end{definition}

The logical conjunctions do not really require more explanations than already said. A predicate call to a previously defined predicate may invoke all dependent subcalls, although predicate calls are not classic procedure calls. An brief introduction of Prolog using predicates and reasoning a specific Hoare-calculus may be found in \cite{haberland14}. Predicates may be parameterised by zero or more heap terms bound to the predicate. Heap terms may be used as input or output terms, or even both at the same time.
Intentionally, heap terms are used as sub-expressions in logical assertions.
The verification of a predicate retrieves a Boolean value depending on if a given formula is obeyed.

\begin{observation}
\textit{Pointers of pointers} are syntactic sugar. They do not fundamentally extend the expressibility of heap graph assertions. Their only purpose w.r.t. heap terms is to have an additional indirection level for increased programming language flexibility. They act as placeholder or symbol variable for pointer locations.
\end{observation}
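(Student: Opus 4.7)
The plan is to prove the observation constructively by exhibiting a uniform translation from any heap term that mentions a pointer-to-pointer expression into an equivalent heap term that uses only the basic $loc \mapsto val$ heaplets already provided by Definition~\ref{def:HeapTermDefinition}, composed with $\circ$. Since Definition~\ref{def:HeapTermDefinition} already admits arbitrary locations (including fresh symbols acting as placeholders, cf.\ Convention~\ref{conv:HeapAlignment} on aliasing and the midpoint-encoding remark after Fig.~\ref{fig:InvertedVertexGraph}), no new syntactic category is required; so if the translation preserves the denotated heap graph, the claim follows.

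First I would fix notation: write $\&^{n} x$ for an $n$-fold indirection, with $\&^{0} x = x$ and $\&^{n+1} x = \&(\&^n x)$, and extend the definition of heap terms conservatively by a surface form $loc \mapsto \&^n v$. Next, I would define the desugaring $\llbracket \cdot \rrbracket$ inductively on $n$: the base case $n=0$ returns $loc \mapsto v$ unchanged, and in the step case $\llbracket loc \mapsto \&^{n+1} v \rrbracket$ introduces a single fresh symbol $p$ (distinct from every name occurring in the surrounding extended heap term) and yields
\[
(loc \mapsto p) \circ \llbracket p \mapsto \&^{n} v \rrbracket.
\]
This is exactly the chain of $n{+}1$ heaplets joined by heap conjunction, each of which is a legal term by Definition~\ref{def:HeapTermDefinition}. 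For compound terms, $\llbracket \cdot \rrbracket$ is applied compositionally through $\circ$, $\|$, $\wedge$, $\vee$, and $\neg$; the freshness side-condition is discharged globally by alpha-renaming the introduced intermediates, which is sound because placeholders only act as vertex labels in the heap graph.

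The semantic equivalence step is the crux. I would argue that under the heap-graph interpretation of Definition~\ref{def:finiteHeapGraphDefinition}, the edge-centric representation of the original $loc \mapsto \&^{n+1} v$ is the directed path $loc \to p_1 \to \cdots \to p_n \to v$, which is exactly what the desugared conjunction denotes via the vertex-centric encoding invoked after Fig.~\ref{fig:InvertedVertexGraph}. Formally this reduces to showing that the two representations are in bijection on the generated heap graph (which the paper already asserts for source/destination midpoints), and that heap conjunction $\circ$ composes the heaplets into precisely that graph; an induction on $n$ then lifts the claim from a single heaplet to arbitrary depths, and a structural induction on $ET$ propagates it through the remaining constructors.

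The main obstacle I expect is not the combinatorics but the hygienic handling of the fresh placeholders: one must ensure that the introduced $p_i$ collide with no genuine program variable, no other placeholder already present, and no abstract predicate parameter $\alpha$, and that the non-contraction condition recalled in Convention~\ref{conv:RestrictedObjects}(c) remains satisfied across the $\circ$-conjuncts. Once freshness is argued (either by a standard Barendregt-style convention or by the canonisation steps alluded to in Section~\ref{sec:ConjunctionAndDisjunction}), the remaining content of the observation -- that the additional indirection level only buys programming flexibility rather than new expressibility -- follows because the inverse translation simply contracts any maximal chain of heaplets through fresh intermediate placeholders back into a single surface $\&^n$-form, giving a syntactic round-trip that preserves the denoted heap graph in both directions.
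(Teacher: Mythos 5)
Your proposal is correct, but it does genuinely more work than the paper itself does: the paper gives this observation no proof at all. It is defended only by the informal paragraph that follows it (``by pointers of pointers neither the heap graph itself gets extended nor the referenced heap\ldots''), together with the remark that symbolic placeholders, the ``,''-operator and abstract predicates can already select chains of heaplets; the paper even calls the claim a ``conjecture'' at the end of that paragraph. Your argument supplies exactly what is missing there: a conservative surface syntax $loc \mapsto \&^{n}v$ for iterated indirection (which the grammar of definition \ref{def:HeapTermDefinition} never actually introduces), an inductive desugaring into the $\circ$-chain $loc \mapsto p_1 \circ p_1 \mapsto p_2 \circ \cdots \circ p_n \mapsto v$ through fresh placeholders, and the semantic point that this chain denotes precisely the directed path the original expression describes in the heap graph of definition \ref{def:finiteHeapGraphDefinition}. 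Note that the chain is well-formed under definition \ref{def:HeapConjunctionDefinition} because each $p_i$ is the unique free destination vertex of the preceding conjunct, and freshness preserves the non-contraction condition of conventions \ref{conv:RestrictedObjects}; so your translation is compatible with all of the paper's machinery. What your version buys is an actual elimination procedure and a precise reading of ``does not extend expressibility''; what the paper offers is only the intuition that indirection adds no new vertices or edges. One caveat on your closing round-trip remark: the inverse translation is not needed for the observation and is not a bijection as stated, since not every maximal $\circ$-chain through intermediate vertices arises from an indirection --- the intermediates must be anonymous, have in-degree and out-degree one, and not be shared with any other heaplet (compare Fig.~\ref{fig:SampleHeapGraph1}, where interior cells carry content of their own). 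Restricting the contraction to the placeholders your own translation introduced makes the round-trip sound, and the forward direction alone already establishes the claim.
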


By pointers of pointers neither the heap graph itself gets extended nor the referenced heap in comparison with no pointers of pointers. Symbolic variables and placeholders are useful, because they may select numerous heaplets at once. But, the '',``-operator can do this already for linearly-linked lists and this operator was found superficial in terms of expressibility. In addition to that, it should be noted, that abstract predicates may also perform a selection of numerous heap cells.
Although not too useful in a theoretic manner, the above conjecture does not necessarily exclude usability gains in practice.\\

\begin{definition}
\textit{Heap conjunction} $H \circ \alpha \mapsto \beta$ is defined as heap graph, where $G=(V,E)$ is $H$'s heap graph representation, $H$ is a heap graph, and $\alpha \mapsto \beta$ is a points-to heaplet:

$\left\{
\begin{array}{lll}
  (V \cup \{\alpha,\beta\} \cup \beta', && \mbox{if } isFreeIn(\alpha,H) \\
   E \cup \{(\alpha,\beta)\} \cup \{ (\beta,b) | b \in \beta'\})  && \mbox{if } H=\underline{emp}\\
								  && \ \ \ (V=E=\emptyset)\\
  \mbox{\underline{false}}  && \mbox{otherwise}
\end{array}
\right.
$

where $\beta' = vertices(\beta) \subseteq V$ determines all heap graph vertices being directly pointed by $\beta$, which in case $\beta$ is an object includes all of the fields pointing to some vertices. Since $\alpha$ must be a unique location (for instance an object access path) there may be only either one or no heap vertex matching in $isFreeIn$ for a given heap $H$. The assumption is there is always exactly one matching free vertex for conjunction when building up a heap graph from a scratch, otherwise two heaps are not linkable.
\label{def:HeapConjunctionDefinition}
\end{definition}

\begin{figure}[t]
\begin{tabular}{c|c}
 $\xymatrix{
    \circ \ar[r]^a & \circ \ar@{..>}[d] & \circ \ar[r]^c    & \circ \\
                   & \circ \ar[r]^b     & \circ \ar@{..>}[u]
 }$
 &
 $\xymatrix{
    \circ \ar[r]^a & \circ \ar[r]^b & \circ \ar[r]^c    & \circ
 }$\\\\
 heap graph before ... & and after conjunction
\end{tabular}
\caption{Heap graph before and after heap conjunction.}
\label{fig:HeapGraphConjBeforeAfter}
\end{figure}
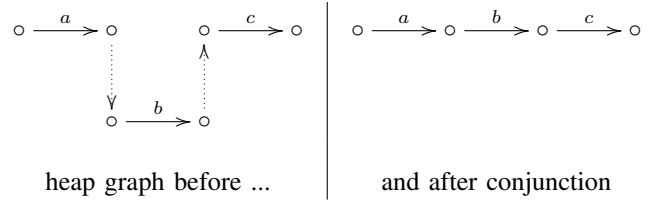

Lets say we would like to join three points-to pairs $a,b,c$ together (see Fig. \ref{fig:HeapGraphConjBeforeAfter}). First, $a$ must be expressed either purely by $a$ itself or by $\underline{emp} \circ a$. Only then $b$ might be connected to $a$, iff additionally $a$ contains actually a matching destination vertex that is not being assigned elsewhere. Once we have $a \circ b$, only then the edge $c$ may be connected as announced in the previous step, and we finally obtain the heap graph as seen in Fig. \ref{fig:HeapGraphConjBeforeAfter}. Since we may also conjunct any kind of graphs, e.g., binary trees, we allow to vary the conjunction ordering as long as the resulting graph is connected. For instance, $a \mapsto 5$ would be a valid heap conjunction, but $a \mapsto 5 \circ b \mapsto 5$ would be not. Notice that this way we still may express aliases if we want, for instance the heap graph $\xymatrix{ x \ \circ \ar[r] & \circ^z &  \ar[l] \circ \ y }$ could be expressed as heap term $x \mapsto z \circ y \mapsto z$.

In Fig. \ref{fig:HeapGraphConjBeforeAfter} all source and target vertices are simple and not annotated. In general the vertices may naturally be simple or compound in case of objects.
For the sake of simplicity, only a one-to-one connection is considered further, however assigning multiple objects to fields at once might be a very convenient method as long as the assignment is unambiguous, especially when it comes to arrays where a separator might be needed.\\
\textbf{Remark:}
Let $\varPhi_0$ be some heap, then $\varPhi = \varPhi_0 \circ a_0 \mapsto b_0 \Leftrightarrow \exists (a_m \mapsto b_m) \in \varPhi_0 \wedge (a_m=a_0, b_m\neq b_0 \vee a_m \neq a_0, b_m=b_0)$.

\begin{theorem}
 $H_1 \circ H_2$ conjuncts two heaps $H_1$ and $H_2$, if there exists at least one common vertex in each heap graph representations. It is agreed by convention $H_1 \circ \underline{emp} = \underline{emp} \circ H_1 = H_1$ holds.
\label{the:GeneralizedHeapConjunctionTheorem}
\end{theorem}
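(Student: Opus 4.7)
The plan is to reduce the generalized statement to the already-given single-heaplet case of Definition \ref{def:HeapConjunctionDefinition} by structural induction on the syntactic form of $H_2$ according to Definition \ref{def:HeapTermDefinition} (restricted to the $\circ$-fragment and $\underline{emp}$, since $\|$ is not involved here). The two auxiliary assumptions I would need are (i) that the convention $H_1 \circ \underline{emp} = \underline{emp} \circ H_1 = H_1$ holds, which is stipulated in the statement itself, and (ii) associativity of $\circ$ up to the well-formedness side condition, which I would invoke as an intermediate lemma (it follows from the set-theoretic definition of $\cup$ on $V$ and $E$ in Definition \ref{def:HeapConjunctionDefinition}).

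First I would dispose of the base cases. If $H_2 = \underline{emp}$, the result is immediate from the stated convention; symmetrically for $H_1 = \underline{emp}$. If $H_2 = \alpha \mapsto \beta$ is a single heaplet, then $H_1 \circ H_2$ is precisely the expression that Definition \ref{def:HeapConjunctionDefinition} already interprets: the operation succeeds exactly when $isFreeIn(\alpha,H_1)$ holds, which supplies the required common vertex (the free endpoint in $H_1$'s graph onto which $\alpha$ is glued), and otherwise yields $\underline{false}$. So on heaplets the theorem coincides with the existing definition.

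For the inductive step I would write $H_2 = H_2' \circ (\alpha \mapsto \beta)$ and compute
\begin{equation*}
  H_1 \circ H_2 \;=\; H_1 \circ \bigl(H_2' \circ (\alpha \mapsto \beta)\bigr) \;=\; (H_1 \circ H_2') \circ (\alpha \mapsto \beta),
\end{equation*}
using associativity. The inductive hypothesis applied to $H_1 \circ H_2'$ yields a well-defined heap graph provided $H_1$ and $H_2'$ share a common vertex; then a second application of Definition \ref{def:HeapConjunctionDefinition} appends the final heaplet, which requires $\alpha$ to be free in the intermediate graph. Aggregating these side conditions gives the global criterion: the iterated conjunction is non-$\underline{false}$ exactly when the two heap graph representations of $H_1$ and $H_2$ have at least one common vertex at which a dangling edge of the one is saturated by the other.

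The main obstacle will be the order-independence of the construction, since a naive induction picks a particular last heaplet. To resolve this I would argue that the result $(V_1 \cup V_2, E_1 \cup E_2)$ depends only on the underlying graph structures, not on the syntactic decomposition chosen for $H_2$: any two valid decompositions of $H_2$ into heaplets yield the same edge and vertex sets under $\cup$, and the freeness side conditions accumulated along different orders coincide with the single global condition that every identified vertex be free in the other graph. Once this order-independence is in place, the reduction to Definition \ref{def:HeapConjunctionDefinition} carries the semantics of $\circ$ uniformly from heaplets to arbitrary heap terms, and the convention for $\underline{emp}$ handles the edge case of an empty operand.
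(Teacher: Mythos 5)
Your reduction to Definition~\ref{def:HeapConjunctionDefinition} by induction over the heaplet decomposition of $H_2$ is in the same spirit as the paper's argument, but the two proofs are organised around different pivots, and each covers a point the other leaves implicit. The paper does not induct on the syntax of $H_2$ at all: it splits on whether the two heap graphs share a common element (no common element gives $\underline{false}$ directly; at least one common element lets the graphs be linked), and then spends its remaining effort on the case of \emph{several} common vertices, arguing that the extra coincidences merely ``meld'' edges and that the melded graph is still simple --- i.e.\ that the result is again a legitimate member of $\Omega$ under convention~\ref{conv:RestrictedObjects}(d). Your proof instead peels off one heaplet at a time and discharges well-definedness through an order-independence argument on $(V_1\cup V_2, E_1\cup E_2)$; that is cleaner as a construction, but it never checks that the resulting union graph remains simple when two identified vertices force previously distinct edges to coincide, which is precisely the content of the paper's melding-by-contradiction step. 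Second, your appeal to associativity as an ``intermediate lemma'' deserves a caveat: in the paper associativity is only established afterwards, in Lemma~\ref{lem:HeapConjunctionMonoid}, whose closure argument in turn cites this very theorem. Your claim that associativity falls out of the set-theoretic $\cup$ on $V$ and $E$ is defensible for the graph part, but the $isFreeIn$ side conditions do not obviously regroup for free --- whether $\alpha$ is free in $H_1\circ H_2'$ depends on the intermediate graph, not just on the final union --- so if you keep this route you should prove the regrouping of side conditions explicitly rather than bundle it into the later order-independence remark. With those two patches (simplicity of the melded graph, and a non-circular, side-condition-aware associativity lemma) your version is a sound and arguably tighter rendering of what the paper argues informally.
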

\begin{proof}
 This theorem is actually a generalisation of definition \ref{def:HeapConjunctionDefinition}. In contrast to definition \ref{def:HeapConjunctionDefinition}, the term on the right-hand side of the generalisation of $\circ$ is searched for a matching vertex. $H_1 \circ H_2$ represents one connected heap graph. Both, $H_1$ and $H_2$ may either be heaplet or a composition of heaplets of kind $a_1 \mapsto b_1 \circ a_2 \mapsto b_2 \circ \cdots \circ a_n \mapsto b_n$. In order to show the correctness of the theorem first we need to show is that if there is no common element in both heap graphs, then by definition \ref{def:HeapConjunctionDefinition} we obtain $\underline{false}$, which corresponds to what we would obtain from a conjunction. Otherwise, if we do have at least one common element, then inductively we do not bother about further common elements. So, we link both heap graphs up and the conjunction on heaps refers to connectible heaps. Further common elements would meld existing heap graph edges, the melded graph still is simply linked (but with multiple bridges), otherwise this would mean at least source or target of a melded heap graph edge would exclusively be either in $H_1$ or in $H_2$, and in both $H_1$ and $H_2$ at the same time, which is a contradiction, hence we just showed the theorem holds.
\end{proof}

Regarding the search for a matching element the $a\circ a$-decider mentioned later will resolve this practical issue.

\begin{observation}
 In an abstract predicate locations may be symbols. In order to increase reuse of abstract predicates for different locations and different location kinds (primarily for locals and object fields) it is agreed, that the field accessor ''.`` is a left-associative binary operator.
\label{obs:ObjectPathAccessor}
\end{observation}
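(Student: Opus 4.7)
The observation comprises two assertions: first, that \emph{loc} positions in abstract predicates accept symbolic variables, and second, that the field accessor ``.'' is left-associative. My plan is to justify both by returning to the term-level definitions already in place.

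First I would argue the symbolic-location claim directly from Definition~\ref{def:HeapTermDefinition}: the nonterminal $loc$ is already permitted to range over ``a symbol representing some heap variable''. Hence when an abstract predicate $p(\alpha)$ from Definition~\ref{def:HeapTermExtendedDefinition} is invoked with a parameter $\alpha$ substituted into the $loc$ position of one of its points-to heaplets, the substitution is type-compatible by construction. No further machinery is required beyond a routine substitution argument; one checks that replacing a symbol by either a local identifier or an object access path preserves the heap term grammar and therefore the shape of the underlying heaplet.

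Next I would handle the left-associativity part. The strategy is to fix the intended denotation first: writing $x.f_1.f_2.\cdots.f_n$ should yield the vertex obtained by starting at $x$, following the edge labelled $f_1$, then $f_2$, and so on, which by Convention~\ref{conv:HeapAlignment} is only legal through a valid object access path. Under left-associative parsing $(((x.f_1).f_2)\cdots).f_n$ each intermediate subterm $x.f_1.\cdots.f_k$ is itself a well-formed $loc$, so the next field $f_{k+1}$ is accessible at every step. Under a right-associative parsing, the subterm $f_i.f_{i+1}$ would have to stand alone, but $f_i$ in isolation has no object root and so does not satisfy the grammar of $loc$; the parse would therefore violate the heaplet schema $loc\mapsto val$. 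Left-associativity is thus forced by the syntax of heap terms, not chosen arbitrarily.

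The reuse claim I would discharge by exhibiting substitution instances: if $p(\alpha)$ has body containing $\alpha.f \mapsto v$, then instantiating $\alpha$ with a local $x$ produces $x.f \mapsto v$, while instantiating it with a compound path $y.g$ produces $(y.g).f \mapsto v$, which left-associativity identifies with $y.g.f \mapsto v$; both are well-formed heaplets and therefore admissible inputs to $p$. The main obstacle I anticipate is expositional rather than mathematical: distinguishing clearly between the genuine syntactic claim (well-formedness is preserved under substitution of a symbol by either class of $loc$) and the notational agreement (associativity), and making transparent that the latter is compelled, not stipulated, once one insists on the $loc\mapsto val$ schema together with Convention~\ref{conv:HeapAlignment}.
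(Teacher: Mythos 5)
Your reading matches the paper's: the observation is presented as a convention, and the paper's entire justification is the one-line remark that $object1.field1.field2.field3$ is internally represented as $(((object1).field1).field2).field3$ so that access-path prefixes may be substituted by variable symbols --- which is exactly the substitution argument in your final paragraph, and your appeal to Definition~\ref{def:HeapTermDefinition} for symbolic $loc$ is likewise what the paper relies on. The only point where you go beyond the source is in claiming that left-associativity is \emph{compelled} rather than stipulated; the paper explicitly phrases it as an agreement (``it is agreed''), and while your argument that a right-associative parse would strand rootless field suffixes is plausible, it is an embellishment the paper neither states nor needs.
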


Left-associativity means $object1.field1.field2.field3$ is internally represented by $(((object1).field1).field2).field3$. This way object access paths may be substituted by variable symbols, which raise modularity and flexibility of access paths expressions.

\begin{lemma}
$G=(\Omega, \circ)$ is a monoid, where $\Omega$ denotes the set of heap graphs and $\circ$ denotes heap conjunction.
\label{lem:HeapConjunctionMonoid}
\end{lemma}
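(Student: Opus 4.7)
The lemma asks me to verify three monoid axioms: closure of $\circ$ on $\Omega$, existence of a two-sided identity, and associativity. The identity part is immediate from Theorem \ref{the:GeneralizedHeapConjunctionTheorem}, which already records $H \circ \underline{emp} = \underline{emp} \circ H = H$, so $\underline{emp}$ plays the role of the neutral element and no further work is needed there.

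For closure I would read $\Omega$ as containing $\underline{false}$ as a distinguished absorbing element; Definition \ref{def:HeapConjunctionDefinition} always returns either a well-formed directed graph or $\underline{false}$, so under this reading closure is automatic. In the successful case the result is still a heap graph in the sense of Definition \ref{def:finiteHeapGraphDefinition}: it remains simple because the uniqueness of $\alpha$ as a location forbids multiple edges, and it remains connected because the attachment point matched by $isFreeIn$ already lies in the existing graph.

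The substance of the argument is associativity, $(H_1 \circ H_2) \circ H_3 = H_1 \circ (H_2 \circ H_3)$. My plan is structural induction on the number of heaplets in $H_3$, using the generalised form from Theorem \ref{the:GeneralizedHeapConjunctionTheorem}. In the base case $H_3$ is either $\underline{emp}$ (handled by the identity case) or a single heaplet $h$, and both bracketings reduce to computing the graph union of $H_1$, $H_2$ and $h$ together with their new edges. For the inductive step I write $H_3 = H_3' \circ h$, apply the hypothesis to $H_1, H_2, H_3'$, then peel off $h$ on each side and invoke the base case.

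The delicate step, and the one I expect to cost the most, is aligning the failure outcomes: a priori one bracketing could produce $\underline{false}$ where the other succeeds. The auxiliary observation I would establish en route is that each successful $\circ$ purely enlarges the vertex and edge sets, so $isFreeIn(\alpha, \cdot)$ is monotone along successful conjunction; this forces the non-incidence condition that triggers $\underline{false}$ to propagate in the same way under either association. Once that monotonicity is in hand, matching the two successful graphs comes down to associativity and commutativity of set union on vertices and edges, which is routine.
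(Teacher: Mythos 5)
Your overall skeleton matches the paper's: the identity axiom is discharged by the convention in Theorem~\ref{the:GeneralizedHeapConjunctionTheorem}, and closure is argued by treating $\underline{false}$ as a distinguished absorbing value so that Definition~\ref{def:HeapConjunctionDefinition} always returns something in $\Omega$. Where you diverge is associativity: the paper simply reads it off Fig.~\ref{fig:HeapGraphConjBeforeAfter} (``the joining vertex of $b$ remains invariant when altering the connect ordering''), whereas you attempt a structural induction and, to your credit, explicitly flag the real difficulty -- that one bracketing might return $\underline{false}$ while the other succeeds.

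Unfortunately the device you propose to close that gap does not work. Monotonicity of the vertex and edge sets along \emph{successful} conjunctions only gives you one direction: an attachment that succeeds against a partial heap still succeeds against an enlarged one. It does not rule out the converse situation, where an attachment fails against a partial heap but would have succeeded had a later conjunct been merged first. Concretely, take $H_1 = a \mapsto b$, $H_2 = c \mapsto d$, $H_3 = b \mapsto c$. Then $(H_1 \circ H_2)$ is $\underline{false}$ because $a \mapsto b$ and $c \mapsto d$ share no vertex, so $(H_1 \circ H_2) \circ H_3 = \underline{false}$ under your absorbing convention; but $H_2 \circ H_3$ succeeds (they share $c$), and $H_1 \circ (H_2 \circ H_3)$ succeeds as well (sharing $b$), yielding the chain $a \to b \to c \to d$. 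So the two bracketings disagree, and no monotonicity lemma about $isFreeIn$ can repair this, because the failure occurs before the missing joint ever becomes available. The statement is only salvageable if one restricts attention to bracketings whose every intermediate result is connected (the paper hints at this with ``we allow to vary the conjunction ordering as long as the resulting graph is connected''), or if one redefines $\circ$ so that disconnected intermediate unions are retained rather than collapsed to $\underline{false}$; your proof, as written, would need one of these amendments before the induction on the heaplets of $H_3$ can go through.
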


\begin{proof}
In order to prove $G$ is a monoid we need to show (i) $\Omega$ is closed under $\circ$, (ii) $\circ$ is associative, and (iii) $\exists \varepsilon \in \Omega. \forall m \in \Omega: m \circ \varepsilon = \varepsilon \circ m = m$.

First of all, by $\omega \in \Omega$ we refer to a connected heap graph over ''$\mapsto$``-heaplets as being introduced in definition \ref{def:HeapTermDefinition}. According to definition \ref{def:HeapConjunctionDefinition} $\forall \omega \in \Omega: \omega \circ \omega = \underline{false}$, which is defined. Alternatively, there may be only two cases for some $\omega_1, \omega_2 \in \Omega$: if $\omega_1$ and $\omega_2$ do have at least one joining vertex, then according to theorem \ref{the:GeneralizedHeapConjunctionTheorem} the resulting heap graph is well-defined, otherwise the result is $\underline{false}$ (meaning $\omega_1$ and $\omega_2$ are disjoint). This way we have just shown that $\Omega$ is closed under $\circ$ and that a ''\textit{meaningful}`` heap graph conjunct may be obtained by connectible heap graphs. The connection is established successively.

Second, associativity needs to be demonstrated, namely that $m_1 \circ (m_2 \circ m_3) = (m_1 \circ m_2) \circ m_3$ holds. When looking at Fig. \ref{fig:HeapGraphConjBeforeAfter} we can immediately see the validity of both directions of the equation, because it does not matter whether $a$ and $b$ are joined first, or $a$ is connected to connected $b \circ c$, since the joining vertex of $b$ remains invariant when altering the connect ordering.

Now $G$ forms a semi-group, we still need to show there always exists some neutral element $\varepsilon$, so (iii) holds. This follows, however, immediately from the generalised heap theorem \ref{the:GeneralizedHeapConjunctionTheorem}.
\end{proof}

\textbf{Remark:} From (i) follows that $c \not \in b \wedge c \neq a$: $a \mapsto b \ \circ \ c \mapsto d \equiv \underline{false}$, and that $a\mapsto b \ \circ \ a \mapsto d \equiv \underline{false}$ holds. Furthermore, it is intuitively clear that connecting two heap graphs may be done using different joining vertices, regardless of which joints are connected first. The resulting heap graph shall be the same due to \textit{confluence}, due to (ii) and, moreover, due to the property being demonstrated later in definition \ref{lem:HeapConjunctionGroupProperty}.

\textbf{Remark:} Closeness (i) demonstrates the non\--re\-pe\-ti\-tiveness property of a Substructural Logic (the Separation Logic as still to be shown later) remains.

\begin{theorem}
$G=(\Omega, \circ)$ is an Abelian group.
\label{lem:HeapConjunctionGroupProperty}
\end{theorem}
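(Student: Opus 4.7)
By Lemma~\ref{lem:HeapConjunctionMonoid} we already know that $(\Omega,\circ)$ is a monoid with neutral element $\underline{emp}$. To upgrade this to an Abelian group it remains to verify two further properties: (iv) commutativity of $\circ$, and (v) existence, for each $H\in\Omega$, of an inverse $H^{-1}\in\Omega$ with $H\circ H^{-1}=H^{-1}\circ H=\underline{emp}$. I would carry these out in that order, because (iv) lets me argue (v) on only one side of the equation.

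\textbf{Commutativity.} Here I would first treat the base case of Definition~\ref{def:HeapConjunctionDefinition} on a heap $H$ and a single heaplet $\alpha\mapsto\beta$. The constructed graph is built from set unions $V\cup\{\alpha,\beta\}\cup\beta'$ and $E\cup\{(\alpha,\beta)\}\cup\{(\beta,b)\mid b\in\beta'\}$, which are symmetric in the two operands once we extend the side condition $isFreeIn(\alpha,H)$ to allow the ``free'' vertex to sit in either heap. Using Theorem~\ref{the:GeneralizedHeapConjunctionTheorem} I would then lift this to arbitrary $H_1,H_2\in\Omega$: the existence of a common joining vertex is a symmetric relation, and once such a vertex is fixed the resulting vertex and edge sets do not depend on which side is processed first. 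The $\underline{false}$ branch is likewise symmetric. A short structural induction on the number of heaplets in the right operand would make this rigorous.

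\textbf{Inverses.} This is the step I expect to be the main obstacle, and it is where the ``heap inversion as a notational trick'' mentioned in the introduction has to do real work. Na\"\i vely, conjunction only grows the underlying graph, so no graph-level $H^{-1}$ can literally erase edges. My plan is therefore to define $H^{-1}$ as the formal edge-reversal of $H$ (swap source and destination in every heaplet of the vertex-centric representation used in Fig.~\ref{fig:InvertedVertexGraph}), and then to interpret $H\circ H^{-1}=\underline{emp}$ up to the intended equivalence on heap graphs: each heaplet and its reverse jointly contribute no net points-to information, so all paths collapse, leaving a heap whose interpretation is $\underline{emp}$. If this direct route does not close the argument, a fallback would be to pass to the Grothendieck group completion of the commutative monoid established in Lemma~\ref{lem:HeapConjunctionMonoid} (which is automatic once (iv) is known), and to exhibit $\Omega$ as closed under the adjoined inverses via the edge-reversal construction. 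In both routes the two remaining checks are that $H^{-1}\in\Omega$ (connectedness and simplicity of the reversed graph) and that $H^{-1}$ is well defined on composite vertices respecting Convention~\ref{conv:HeapAlignment}; the latter is where the ``$.$''-accessor observation~\ref{obs:ObjectPathAccessor} should let me descend structurally through object fields.

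Once (iv) and (v) are in place, combining them with Lemma~\ref{lem:HeapConjunctionMonoid} yields an Abelian group, completing the proof.
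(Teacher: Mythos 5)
Your treatment of commutativity matches the paper's: a base case on two heaplets followed by an induction lifted through Theorem~\ref{the:GeneralizedHeapConjunctionTheorem}, so that part is fine. The genuine gap is in the inverses, and neither of your two routes closes it. Edge reversal does not yield an inverse under Definition~\ref{def:HeapConjunctionDefinition}: for a single heaplet, $a \mapsto b \circ b \mapsto a$ is, by that definition, a two-vertex directed cycle (a perfectly legal simple heap graph), not $\underline{emp}$. Your phrase ``up to the intended equivalence on heap graphs'' is doing all the work, but no such equivalence is defined anywhere in the paper, you do not define one, and any relation identifying a $2$-cycle with the empty heap would have to be a congruence for $\circ$, which would collapse far more of $\Omega$ than intended. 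The Grothendieck-completion fallback fails for a structural reason: the monoid of Lemma~\ref{lem:HeapConjunctionMonoid} has $\underline{false}$ as an absorbing element (e.g.\ $\omega \circ \omega = \underline{false}$ and disjoint heaps conjunct to $\underline{false}$), and the group completion of a commutative monoid with an absorbing element is trivial; moreover the completion lives on a quotient of $\Omega \times \Omega$, not on $\Omega$, so it would not prove the theorem as stated.

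The paper takes a third route that you anticipated but rejected as ``na\"ive avoidance'': it simply \emph{postulates} a new formal object, a negated points-to $a \not\mapsto b$ (more generally $H^{-1}$), adjoins it to the carrier, and stipulates $H \circ H^{-1} = H^{-1} \circ H = \underline{emp}$, interpreting conjunction with an inverse operationally as edge removal (plus vertex removal when a vertex becomes unreferenced). Because naive edge removal can leave a disconnected or inconsistent term, the paper then introduces two canonization steps (replace $\circ$ by $\|$ when a bridge is deleted; delete orphaned vertices) to restore well-formedness, and defers the homomorphism property $(g_1 \circ g_2)^{-1} \equiv g_1^{-1} \circ g_2^{-1}$ to Lemma~\ref{lem:HeapInversionHomomorphism}. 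If you want to repair your proposal, you should replace edge reversal by this kind of formal adjunction of ``not-points-to'' elements and then verify closure of the enlarged carrier under $\circ$ together with the canonization steps; as it stands, your construction produces an element of $\Omega$ that is provably not $\underline{emp}$.
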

\begin{proof}
Due to lemma \ref{lem:HeapConjunctionMonoid}, $G$ is a monoid; hence we still need to show (i) the existence of an inverse element for every heap graph, s.t.
\begin{eqnarray}
\forall \omega \in \Omega. \exists \omega^{-1} \in \Omega: \omega \circ \omega^{-1} = \omega^{-1} \circ \omega = \varepsilon
\label{eqn:InverseExists}
\end{eqnarray}
and (ii) $\circ$ is commutative.

Let us start to prove the induction with (ii): for the base case ''$loc_1 \mapsto var_1 \circ loc_2 \mapsto var_2 = loc_2 \mapsto var_2 \circ loc_1 \mapsto var_1$`` of definition \ref{def:HeapTermDefinition} the equivalence holds obviously. The inductive case holds also as long as the conditions on $\circ$ are obeyed, the proof induction can be deduced from Fig. \ref{fig:HeapGraphConjBeforeAfter}, implying commutativity holds for arbitrarily connected heaps. However, when it comes to abstract predicate, the boundaries of a $\circ$-connected heap graph term may vanish. This needs to be taken into consideration by whoever writes the specification.

Now, let us proceed with (i). We do have the problem of finding an inverse for whatever heap we get. The question what it means particularly raises immediately. If we think about natural numbers as operating carrier set and an attempt to invert addition, we factually introduce subtraction on integers. The same happens to complex numbers as an extension of real numbers. Nobody really is not able to count imaginary numbers in practice. Nevertheless, this extension seem to simplify basic calculations significantly in applications. So, why not assume for the moment and postulate equation (\ref{eqn:InverseExists}) right until found different?

And so, what does it \textit{intuitively} mean ''\textit{to negate a heap}`` ?  One could think of negating a points-to predicate affects only the state or that there is just no such heap reference. However, it does not really \textit{undo} a heap reference after all. A hypothetical ''\textit{not-points-to}`` requires primarily some kind of a ''\textit{transcendental heap removal}`` -- at the first glance this may indeed sound like a delicate problem, because up to now we were only specifying what is in the heap. We shall also be able to specify what is not in, but we had no chance whatsoever to specify a predicate which states some heap must be removed ''\textit{somehow}``. Let us not bother about it too much for the moment and focus instead only on equation (\ref{eqn:InverseExists}).
 What this equation actually states is a \textit{negated points-to} $a \not \mapsto b$ relationship, or more generalised some \textit{negated heap} $H^{-1}$, s.t. by convention $a \mapsto b \circ a \not \mapsto b = \underline{emp}$ and $a \not \mapsto b \circ a \mapsto b = \underline{emp}$, and more general: $H \circ H^{-1} = H^{-1} \circ H = \underline{emp}$. This means $\omega \circ \omega^{-1}$ removes a heap, and in fact it is an edge removal in addition to an optional heap graph vertex removal in case there are no more edges going to/leaving from the corresponding heap vertex. It is now easy to see why $H \circ H^{-1} \circ H \equiv H$ holds. For demonstration let us have a look at Fig. \ref{fig:HeapGraphInversion}. The heap states before inversion $d\mapsto a \ \circ \ a\mapsto b \ \circ \ c \mapsto b$, when applied $\circ (a\mapsto b)^{-1}$ we obtain $d\mapsto a \ \circ \ a\mapsto b \ \circ \ c \mapsto b \ \circ \ (a\mapsto b)^{-1}$ equals $d\mapsto a \ \circ \ a\mapsto b\ \circ \ (a\mapsto b)^{-1} \ \circ \ c \mapsto b$ equals $d\mapsto a \ \circ \ c \mapsto b$, which may not occur quite plausible at the first view yet, because both pointers do not interfere. Therefore it is required to perform one generic step. 
 
\textbf{Canonization step I}: If a bridge is removed from between sub-heap graphs, then the conjunction needs to be replaced by a heap disjunction.
  
For the example a bridge is detected between $a$ and $b$, so $\circ$ may be substituted by $\|$ in the remaining heap term, and so the result appears plausible again. But for sake of completeness heap graph vertices may need to be removed even completely. This becomes urgent especially when it comes to object field locations.
 
\textbf{Canonization step II:} Remove some heap graph vertex $a$ entirely whenever there are no more references to it.
 
Applying those two canonization steps keeps the chosen model sound and confluent (proof skipped here).

 \textbf{Remark:} We did not mention heap generalisations explicitly in the previous paragraphs, although it remains to the reader to prove correctness of $H \circ H^{-1} \equiv \underline{emp}$ (proof by induction over $\circ$, use the fact that $(g_1 \circ g_2)^{-1} \equiv g_1^{-1} \circ g_2^{-1}$ holds, so there exists a homomorphism for $.^{-1}$ w.r.t. $\circ$, refer to  lemma \ref{lem:HeapInversionHomomorphism}).
 
 \textbf{Convention:} Condition (i) implies particularly $\underline{emp} \circ \underline{emp}^{-1} \equiv \underline{emp}$, because we agree on $\underline{emp}^{-1} \equiv \underline{emp}$.
\end{proof}

\begin{figure}[t]
\begin{tabular}{lllll}
 \xymatrix{
   a \ar[r] &  b\\
   d \ar[u] & c \ar[u]
 }
& \parbox[t]{1.7cm}{\begin{center}conjunct:\\$(a \mapsto b)^{-1}$\\ $\Longrightarrow$\end{center}} &
 \xymatrix{
   a \ar@{..>}@/^/[r] &  b\\
   d \ar[u] & c \ar[u]
 }
 &
  $\Leftrightarrow$
 &
 \xymatrix{
   a  &  b\\
   d \ar[u] & c \ar[u]
 }\\
\end{tabular}
\caption{Heap graph before and after inversion}
\label{fig:HeapGraphInversion}
\end{figure}
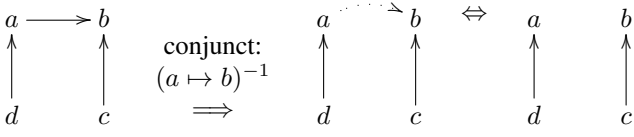

\textbf{Subsumption:} $H \circ a\mapsto b \circ (a\mapsto b)^{-1}$ denotes:

\begin{enumerate}
 \item unlink edge between $a$ and $b$
 \item unlink/remove $a$ if there are no more uses in $H$
 \item unlink/remove $b$ as well if no more uses in $H$
\end{enumerate}

The group properties allow us to define equalities for the separated heap theory. This will allow us, for instance, to define arithmetic equalities, which applied will cause faster convergence to a normalised heap representation. It will lower the risk of highly bloated verification rules and conclusively it will lead us to a smaller logic in combination with partial specification (see section \ref{sec:PartialSpec}). 
Future work may include heap arithmetics to be implemented by satisfiability-modulo-theory solvers, which will be integrated to the verification process. This approach does not only sound promising, but in fact it was successfully proven concept in several different areas, particularly for bloated and notoriously incomplete Hoare logics.

\begin{lemma}
 $(g_1 \circ g_2)^{-1} \equiv g_1^{-1} \circ g_2^{-1}$ holds for any heaplets $g_1$ and $g_2$.
\label{lem:HeapInversionHomomorphism}
\end{lemma}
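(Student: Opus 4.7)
The plan is to obtain the statement as an essentially immediate consequence of the Abelian group structure established in Theorem \ref{lem:HeapConjunctionGroupProperty}. In any Abelian group the inversion map is an involutive homomorphism, i.e.\ $(xy)^{-1} = x^{-1}y^{-1}$; specialising this to $(\Omega,\circ)$ with identity $\underline{emp}$ gives exactly the claimed equivalence. So the task reduces to instantiating the standard group-theoretic calculation and then worrying about the boundary conditions inherited from Definition \ref{def:HeapConjunctionDefinition}.

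Concretely, I would verify that $g_1^{-1} \circ g_2^{-1}$ acts as an inverse of $g_1 \circ g_2$ and then appeal to uniqueness of inverses. Using associativity (Lemma \ref{lem:HeapConjunctionMonoid}), commutativity (Theorem \ref{lem:HeapConjunctionGroupProperty}), and the defining property of inversion, I compute
\begin{align*}
(g_1 \circ g_2) \circ (g_1^{-1} \circ g_2^{-1})
  &\equiv g_1 \circ (g_2 \circ g_1^{-1}) \circ g_2^{-1} \\
  &\equiv g_1 \circ (g_1^{-1} \circ g_2) \circ g_2^{-1} \\
  &\equiv (g_1 \circ g_1^{-1}) \circ (g_2 \circ g_2^{-1}) \\
  &\equiv \underline{emp} \circ \underline{emp} \equiv \underline{emp},
\end{align*}
and symmetrically $(g_1^{-1} \circ g_2^{-1}) \circ (g_1 \circ g_2) \equiv \underline{emp}$ by the same rearrangement. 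Since group inverses are unique, this forces $(g_1 \circ g_2)^{-1} \equiv g_1^{-1} \circ g_2^{-1}$. Although the lemma is stated for heaplets, the argument does not use that $g_1,g_2$ are atomic, so the same derivation extends to arbitrary heap terms, which is in fact how it is later invoked inductively in the proof of Theorem \ref{lem:HeapConjunctionGroupProperty}.

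The main obstacle is that this purely algebraic argument glosses over two well-definedness concerns hidden in Definition \ref{def:HeapConjunctionDefinition}. First, $g_1 \circ g_2$ must be meaningful, i.e.\ the two heaplets must share a joining vertex; otherwise the left-hand side collapses to $\underline{false}$ and the equivalence only survives under the auxiliary convention $\underline{false}^{-1} \equiv \underline{false}$, which should be made explicit. Second, the intermediate expression $g_1 \circ g_1^{-1}$ inside the rearrangement may require both Canonization Step~I (replacing a removed bridge by $\|$) and Canonization Step~II (erasing orphan vertices) before the equality with $\underline{emp}$ can be read off literally; here I would rely on the confluence of these two steps postulated in the proof of Theorem \ref{lem:HeapConjunctionGroupProperty} to argue that both sides reach a common normal form after canonization, so that the equivalence $\equiv$ is respected even when the syntactic intermediates differ.
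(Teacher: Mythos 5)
Your argument is correct, but it is not the route the paper takes, and it is in fact the cleaner of the two. You verify directly that $g_1^{-1} \circ g_2^{-1}$ is a two-sided inverse of $g_1 \circ g_2$ via associativity (Lemma \ref{lem:HeapConjunctionMonoid}) and commutativity (Theorem \ref{lem:HeapConjunctionGroupProperty}), and then conclude by uniqueness of inverses in a group. The paper instead generalises the statement to $G = g_1 \circ \cdots \circ g_n$ and proves $G \circ G^{-1} \equiv \underline{emp}$ by induction on $n$; in its inductive step it rewrites $(G_k \circ g_{k+1}) \circ (G_k \circ g_{k+1})^{-1}$ as $(G_k \circ G_k^{-1}) \circ (g_{k+1} \circ g_{k+1}^{-1})$, a rearrangement that already presupposes $(G_k \circ g_{k+1})^{-1} \equiv G_k^{-1} \circ g_{k+1}^{-1}$ --- i.e.\ the very homomorphism property being proved --- and it never invokes uniqueness of inverses to pass from ``$G^{-1}$ annihilates $G$'' back to the claimed identity of inverses. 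Your version buys rigour on both counts, at the price of leaning harder on the Abelian group structure, which (as you note) is itself established in Theorem \ref{lem:HeapConjunctionGroupProperty} with a forward reference to this lemma for generalised heaps; that mutual dependence is present in the paper as well and is not something your proof introduces. Your explicit flagging of the degenerate case where $g_1 \circ g_2$ collapses to $\underline{false}$ under Definition \ref{def:HeapConjunctionDefinition}, and of the role of the two canonization steps in reading off $\underline{emp}$, addresses gaps the paper leaves silent.
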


\begin{proof}
 Lets generalise this lemma, let $G=g_1\circ g_2\circ \cdots \circ g_n$, we need to show $G \circ G^{-1} = \underline{emp}$. This can be shown by induction over $n$. In the base case ($n=1$) we have $g_1\circ g_1^{-1} \equiv \underline{emp}$,  which holds because of the existence of an inverse. 
 For the inductive step let $G=\underbrace{(g_1\circ g_2 \circ \cdots \circ g_k)}_{G_k} \circ g_{k+1}$, then $G\circ G^{-1} = (G_k \circ g_{k+1}) \circ (G_k \circ g_{k+1})^{-1}$ denotes in the inverse part a graph extension. The right part of this equation equals $\underbrace{G_k\circ G_k^{-1}}_{\underline{emp}}  \circ \underbrace{g_{k+1} \circ g_{k+1}^{-1}}_{\underline{emp}} \ \equiv \ \underline{emp}$ (because of the inductive inversion property).
\end{proof}


\begin{definition}
\textit{Heap disjunction} $H \ \| \ a \mapsto b$ defines heap $H$ and heaplet $a \mapsto b$ which do not interfere, iff $G_H$ is the heap graph of $H$, $G_H=(V,E)$, and for all edges $(\_,a) \not \in E$ and there exists no path from $b$ to $H$, and there is no path back from $H$ to $a$.
\end{definition}

That is why $x.b \ \| \ x.c$ does not hold for any object $x$ with fields $b$ and $c$, if there exists at least one common vertex on any path from $x.b$ or from $x.c$.

Let $\Sigma = X_0 \| X_1 \| \cdots \| X_n$ with $n>0$ and $X_j$ is of form $x_j \mapsto y_j$, then $\Sigma = \Sigma_0 \ \| \ a_0 \mapsto b_0 \Leftrightarrow \forall (a_j \mapsto b_j) \in \Sigma_0: a_j \neq a_0 \wedge b_j \neq b_0$.

\begin{theorem}
$G=(\Omega, \|)$ is a monoid and a group, if $\Omega$ is the set of heap graphs and $\|$ denotes heap disjunction.
\end{theorem}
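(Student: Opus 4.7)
The plan is to mirror the structure of Lemma \ref{lem:HeapConjunctionMonoid} together with Theorem \ref{lem:HeapConjunctionGroupProperty}, proving closure, associativity and identity first to obtain the monoid, and then identity's sibling property --- the existence of inverses --- to lift the monoid to a group. Concretely, I would begin by arguing closure of $\Omega$ under $\|$: given $\omega_1,\omega_2\in\Omega$, either the non-interference condition of heap disjunction (no edge into $a$, no path between $b$ and $\omega_1$) is met, in which case $\omega_1\|\omega_2$ is a well-defined heap graph built by disjoint graph union, or it is violated, in which case the value is $\underline{false}$ and still lives in $\Omega$ by convention. Closure thereby inherits the same non-repetitive flavour that was observed for $\circ$, because the interference test prevents a heaplet from being produced twice.

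Next I would establish associativity $(\omega_1 \| \omega_2) \| \omega_3 = \omega_1 \| (\omega_2 \| \omega_3)$ by a straightforward induction on the heaplet structure of the outermost argument, as in definition \ref{def:HeapTermDefinition}: the disjointness predicate used in the definition is symmetric and depends only on the union of vertex sets and edge sets, so in both parenthesisations exactly the same interference checks are performed and exactly the same vertex/edge union is returned. The identity is obviously $\underline{emp}$: disjoining the empty heap trivially cannot interfere with any $\omega$, so $\underline{emp}\|\omega = \omega\|\underline{emp} = \omega$. Commutativity, though not explicitly demanded in the statement, falls out of the same symmetry and yields the Abelian version for free, and I would note it in a brief remark.

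For the group property the delicate clause is the existence of an inverse $\omega^{-1}\in\Omega$ with $\omega\|\omega^{-1}=\omega^{-1}\|\omega=\underline{emp}$. My plan is to reuse exactly the formal negated-heap device of Theorem \ref{lem:HeapConjunctionGroupProperty}: on a single heaplet $a\mapsto b$ declare $(a\mapsto b)^{-1}$ to be the transcendental heap-removal entity such that $a\mapsto b\|(a\mapsto b)^{-1}=\underline{emp}$, and then extend by the homomorphism property, relying on the same Canonization steps I and II to delete dangling vertices and replace evaporated bridges by disjunctions. Because $\|$ already produces a disjoint union of graph components, the canonization is in fact lighter than for $\circ$: step II alone usually suffices, since the deletion of a $\|$-summand cannot split an otherwise connected subgraph. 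An induction analogous to Lemma \ref{lem:HeapInversionHomomorphism}, exploiting $(g_1\|g_2)^{-1}\equiv g_1^{-1}\|g_2^{-1}$, then pushes the base case through to arbitrary $\omega\in\Omega$.

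The step I expect to be the real obstacle is justifying this postulated inverse. For $\circ$ the author invoked an analogy with negative or imaginary numbers and then silently shifted the semantic weight onto the two canonization rules; for $\|$ the same move is available but less intuitive, because $\|$ is meant to join disjoint pieces rather than to compose them, so ``removing'' a summand is conceptually cleaner but has no heap-graph witness unless we agree, exactly as before, that $\omega^{-1}$ is a purely formal entity living in $\Omega$ only for the sake of the algebra. I would therefore devote most of the prose to this point, state the inverse convention explicitly (including $\underline{emp}^{-1}\equiv\underline{emp}$), and defer soundness and confluence of canonization to the already-skipped proof in the $\circ$ case, after which closure, associativity, identity and inverse together give both the monoid and the group.
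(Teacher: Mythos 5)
Your proposal follows essentially the same route as the paper: closure via the non-interference condition, associativity, $\underline{emp}$ as identity, and inverses stipulated by the convention $s \| s^{-1} = s^{-1} \| s = \underline{emp}$. In fact you supply rather more justification than the paper does (it declares associativity ``obvious'' and simply agrees on the inverse convention without invoking canonization), so your elaboration of the formal-inverse issue is a faithful, slightly more careful rendering of the same argument.
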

\begin{proof}
In analogy to the previous lemma, first of all, $\forall m_1,m_2 \in \Omega: m_1 \| m_2$, iff $m_1$ and $m_2$ have no common joint, which is the case whenever there is no path from $m_1$ to $m_2$, and there is not even an indirect heap graph surrounding both $m_1$ and $m_2$. If $m_1$ and $m_2$ are different, then $m_1 \| m_2$ must be a valid heap $\in \Omega$ again, because $m_1$ is from a different heap graph part than $m_2$, and vice versa, so closeness holds.
Associativity holds obviously.
$\underline{emp}$ may be chosen as neutral element, so $\underline{emp} \| m_1 = m_1 \| \underline{emp} = m_1$, by default let $\underline{emp} \| \underline{emp} = \underline{emp}$ hold.
Last, we agree on the convention $s \| s^{-1} = s^{-1} \| s = \underline{emp}$, which behaves similar to $\circ$. Heaps in general obey this rule.
\end{proof}

The heap-wise conjunction and disjunction may be expressed as following:

\begin{tabular}{ll}
 \inference[$\circ_{[B,C]}$]{U \circ B \ \| \ C}{U\circ B\circ C} &
 \inference[$\|_{[B,C]}$]{U\circ B\circ C}{U \circ B \ \| \ C}
\end{tabular}

\begin{eqnarray}
\|_{[B,C]} ; \circ_{[B,C]} ; \|_{[B,C]} \equiv \|_{[B,C]}
\label{eqn:HeapDisjunctionInvariant}\\
\circ_{[B,C]} ; \|_{[B,C]} ; \circ_{[B,C]} \equiv \circ_{[B,C]}
\label{eqn:HeapConjunctionInvariant}
\end{eqnarray}

The operations $\|$ and $\circ$ are dual and self-inverse as can be seen from equations (\ref{eqn:HeapDisjunctionInvariant}) and (\ref{eqn:HeapConjunctionInvariant}), where '';`` is the statement sequentializer. The equations do hold (direct proof, skipped here), because of its self-inverse character and due to the assertion that both specified heap vertices $B$ and $C$, in fact, exist.

\begin{theorem}
Distributivity holds for $\forall a,b,c \in \Omega$ for $\circ$ and $\|$:

\begin{tabular}{ll}
 (i) & $a \circ (b \| c) = (a \circ b) \| (a \circ c)$\\
 (ii) & $(b \| c) \circ a = (b \circ a) \| (c \circ a)$ 
\end{tabular}
\label{lem:DistributivityForConjDisj}
\end{theorem}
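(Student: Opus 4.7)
My plan is to derive (ii) from (i) using the commutativity of $\circ$ established in Theorem~\ref{lem:HeapConjunctionGroupProperty}, and then prove (i) by graph-theoretic case analysis on the points at which $a$ attaches. Once (i) is in hand, (ii) follows by applying commutativity factor-wise: $(b\|c)\circ a = a\circ(b\|c) = (a\circ b)\|(a\circ c) = (b\circ a)\|(c\circ a)$.

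For (i) I would unfold both sides into their underlying heap graphs. Let $G_a$, $G_b$, $G_c$ be the heap graphs associated with $a$, $b$, $c$. The operand $b\|c$ is defined only when $G_b$ and $G_c$ are vertex-disjoint and have no connecting path. Conjuncting with $a$ then identifies at least one free vertex of $a$ with a vertex of the combined graph, and I would split into three subcases: (A) the identification lies exclusively inside $G_b$, (B) exclusively inside $G_c$, or (C) $a$ has free vertices matching in both $G_b$ and $G_c$ simultaneously. In cases (A) and (B) the LHS yields $(a\circ b)\|c$ respectively $b\|(a\circ c)$; on the RHS exactly one of $a\circ b$, $a\circ c$ is well-defined, and the other collapses to $\underline{false}$ by Definition~\ref{def:HeapConjunctionDefinition} for lack of a shared vertex. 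Treating $\underline{false}$ as the unit under $\|$ (dually to $\underline{emp}$ under $\circ$, in agreement with the partial-specification reading of section~\ref{sec:PartialSpec}) makes both sides produce identical vertex and edge sets.

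The hard part is case (C), where $a$ bridges $G_b$ and $G_c$: the LHS fuses the two components through $a$, so the previous disjunction is destroyed, while the RHS places a copy of $a$ in each $\|$-conjunct, yet those copies share the internal vertices of $a$, contradicting the disjointness required by $\|$. I would resolve this tension by appealing to the rewriting rules $\circ_{[B,C]}$ and $\|_{[B,C]}$ together with the invariance identities~(\ref{eqn:HeapConjunctionInvariant})--(\ref{eqn:HeapDisjunctionInvariant}): applying $\|_{[B,C]}$ at one of the two junction vertices splits the LHS so that case (C) reduces to a composition of cases (A) and (B), and canonization step II then identifies the redundant duplicates of $a$ on the RHS. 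As a consistency check, Lemma~\ref{lem:HeapInversionHomomorphism} ensures that any spurious duplicate $a$-contribution behaves like $a\circ a^{-1} \equiv \underline{emp}$, so no residual structure is left over.

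I expect the main obstacle to be a clean formalisation of case (C): the paper's definitions enforce non-repetitiveness (closeness, remark after Lemma~\ref{lem:HeapConjunctionMonoid}), so the duplicated occurrence of $a$ on the RHS is at the boundary of what the calculus permits, and distributivity may in fact need to be stated up to the canonization steps rather than as a strict syntactic identity. If that is unacceptable, a fallback is to restrict (i) (and hence (ii)) to heap terms $a$ whose free vertex set intersects at most one of $G_b$, $G_c$, which makes distributivity an unconditional equality and leaves case (C) as a genuine side condition rather than an obstacle.
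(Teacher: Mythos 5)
The paper offers no proof to compare against---its ``proof'' reads verbatim ``(direct proof, skipped, take note of Fig.~\ref{fig:HeapGraphPoset})''---so your attempt already does more work than the source. Your strategy (derive (ii) from (i) by commutativity, prove (i) by cases on where $a$ attaches) is the natural one, and you have correctly located the two places where the claimed identity is under strain. However, both of your repairs fail as stated. In cases (A)/(B), the paper's own monoid theorem for $(\Omega,\|)$ designates $\underline{emp}$, not $\underline{false}$, as the unit of $\|$; and even granting your reading, $(a\circ b)\|\underline{false}$ reduces to $a\circ b$ and thereby loses the disjoint component $c$ that the left-hand side $a\circ(b\|c)=(a\circ b)\|c$ retains, so the two sides do not have ``identical vertex and edge sets.'' (If $\underline{false}$ is instead taken as absorbing for $\|$, the right-hand side collapses to $\underline{false}$ while the left-hand side is a valid heap---worse.) The identity in case (A) can only be salvaged by silently re-interpreting the undefined conjunct $a\circ c$ as $c$ itself, which no definition in the paper licenses.

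In case (C) the situation is similar: canonization step II removes vertices that have no remaining references; it does not merge two syntactic copies of $a$ sitting in different $\|$-disjuncts. Moreover, a duplicated occurrence of $a$ is governed by the closeness remark ($\omega\circ\omega=\underline{false}$), not by Lemma~\ref{lem:HeapInversionHomomorphism}'s $\omega\circ\omega^{-1}=\underline{emp}$, so your ``consistency check'' invokes the wrong identity. Your fallback---restricting (i) and (ii) to terms $a$ whose attachment vertices meet at most one of $G_b$, $G_c$, and stating the equality only up to canonization---is, in my view, the correct conclusion: as printed, the theorem does not hold as a strict identity of heap graphs under the paper's own definitions, and since the paper supplies no argument, the restricted version is the honest form of the statement. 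Present that version as the theorem you can actually prove, and accompany it with an explicit convention for how an undefined ($\underline{false}$) disjunct is discarded without deleting the orphaned component.
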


\begin{proof}
 (direct proof, skipped, take note of Fig. \ref{fig:HeapGraphPoset}).
\end{proof}

\textbf{Remark:} Since the neutral element for both operations, $\circ$ and $\|$, is $\underline{emp}$, there cannot be defined a field over both operations, although lemma \ref{lem:HeapConjunctionMonoid}, theorem \ref{lem:HeapConjunctionGroupProperty} and  \ref{lem:DistributivityForConjDisj} hold, and $\Omega$ is finite. Particular heaps would be finite. All operations applied to finite heaps would be finite again.

\textbf{Remark:} In analogy to logical conjuncts $\wedge$ and $\vee$ a $\|$-normalform exists when the previous equalities are applied. Lemma \ref{lem:HeapInversionHomomorphism} can be applied for the inversion of generalised heaps.\\

In order to optimize reasoning by minimizing graph size, $\|$ should be moved upwards at most in heap terms, e.g., by applying distributivity rules or by reordering points-to heaps so that the left-hand sides are ordered in lexicographical order by its location. The motivation behind this is, for instance, to optimize incremental verification, so only affected heaps may require re-calculations.

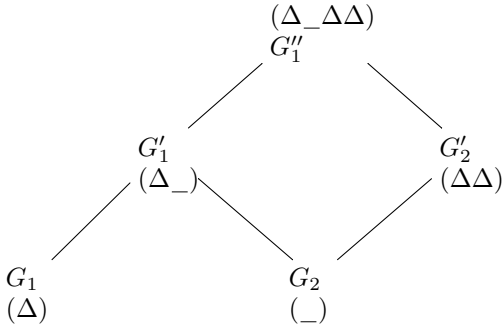
\begin{figure}[t]
\begin{displaymath}
 \xymatrix{
      &  & \parbox[t]{12mm}{$(\Delta\_\Delta\Delta)\\\ \ G_1''$} & \\
      & \parbox[t]{7mm}{$G_1'\\(\Delta\_)$} \ar@{-}[ru] &  & \parbox[t]{7mm}{$G_2'\\(\Delta\Delta)$} \ar@{-}[lu] \\
     \parbox[t]{7mm}{$G_1\\(\Delta)$}  \ar@{-}[ru] &  & \parbox[t]{7mm}{$G_2\\(\_)$} \ar@{-}[lu] \ar@{-}[ru]
 }
\end{displaymath}
\caption{A partially-ordered set (poset) can be defined for heap graphs under inclusion, the join operator is $\circ$.}
\label{fig:HeapGraphPoset}
\end{figure}

As seen in Fig. \ref{fig:HeapGraphPoset} a partially-ordered set can always be defined over inclusion of heap graphs. Despite there might be an infimum defined as $\underline{emp}$ and some always existing supremum, the complete heap graph, the structure is still not a (complete) lattice due to non-holding connective properties w.r.t. $\|$ as meet for absorption. The poset $G$ from Fig. \ref{fig:HeapGraphPoset} contains $\{G_1,G_2,G_1',G_2',G_1''\}$ and can be ordered by the following ascending chains $G_1 \sqsubseteq G_1' \sqsubseteq G_1''$, $G_2 \sqsubseteq G_1'$ and $G_2 \sqsubseteq G_2' \sqsubseteq G_1''$. The supremum is  $G_1'', inf(G)=\underline{emp}$, where $\sqsubseteq$ shall be defined as the heap sub-graph relationship. Obviously, if two heaps are not disjoint (this denotes $\underline{emp}$ because of definition \ref{def:HeapConjunctionDefinition}) they may always be connected with each other in the corresponding Hasse-diagram. This join is always valid because of (1st contradiction) $a\mapsto b \ \circ \ a\mapsto d$ may not occur after a single heap conjunct, or any composite heap in general (2nd contradiction) $a\mapsto b \| b\mapsto d$ contradicts the definition of $\|$. However, it needs to be taken into consideration that the inclusion-ordering mentioned may be destroyed by applying inverse heaps if used arbitrarily (compare with previous section), but those may usually, at least at the moment, be used only in cases where a difference between expected and actual heap graphs needs to be calculated rather than a desired heap graph specification, so the locality property mentioned from section \ref{section:Introduction} remains untouched.


\section{Partial Heap Specification}
\label{sec:PartialSpec}

Having said earlier after definition \ref{def:HeapTermDefinition} class objects may be considered as containers of fields $obj.f_1 \mapsto .. \circ obj.f_2 \mapsto .. \circ obj.f_n \mapsto .. $, all fields constitute a scoped heap (in analogy to single points-to local variables among abstract predicates). Since class object fields may not exist independently from other fields of the same class, they must by convention be $\circ$-conjuncted. In contrast to locals, object fields too require a possibility to specify only parts, hence constants from definition \ref{def:HeapTermExtendedDefinition} are parameterised to $\underline{true}(obj)$ or $\underline{false}(obj)$.  Abstract predicates modularise specifications, they can particularly be used to specify objects from unrelated objects and locals.
W.r.t. the proposed stack-based implementation of a ``$a \circ a$''-decider incoming and outgoing terms for abstract predicates may be traced in order to skip re-verficiation of unaffected parts.

\begin{definition}
 A partial heap specification $\underline{t}(o)$ for some object $o$ is defined as a $\circ$-conjunction of all remaining fields, possibly none, that are not specified until $\underline{t}$ is used and unfold in the current object scope. When $\underline{t}$ is used all actual fields are unfold into the surrounding heap specification, which are not yet been specified in terms of the current scope of $o$.
\end{definition}

\begin{example}
Lets say object $a$ has three fields $f_1$, $g_1$ and $g_2$, and \comp{}{} denotes some (implicit) heap term denotation of type $(ET \rightarrow ET) \rightarrow \{true, false\}$, where the first extended heap is supposed to be the expected and the second extended heap is supposed to be the actual heap term then
 \begin{eqnarray*}
   & \comp{$a.f_1 \mapsto x\circ \underline{true}(a)$}{} = & \comp{$a.f_1 \circ a.g_1 \circ a.g_2$}{}\\   
   = & \comp{$\underline{true}(a) \circ a.f_1 \mapsto x$}{} \neq & \comp{$p(a) \circ a.f_1 \mapsto x$}{}
 \end{eqnarray*} 
where $p$ is some abstract predicate denoting $\underline{true}(a)$. However, \comp{$a.f_1 \mapsto x \circ p(a)$}{} would denote equality, because the stack-oriented recognition finds all remaining fields even when obfuscated beneath several predicate levels. \comp{}{} is a homomorphic mapping regarding $\circ$.
\end{example}

\begin{example}
 \comp{$\underline{true}(a) \circ \underline{true}(a)$}{} = \comp{$a.f_1 \circ a.g_1 \circ a.g_2$}{} $\circ$ \comp{$\underline{true}(a)$}{} = \comp{$a.f_1 \circ a.g_1 \circ a.g_2$}{} $\circ \ \underline{emp}(a)$.
\end{example}


\section{Discussions}

By exchanging one ambiguous spatial heap operator by two strict operations, the initial and core properties of a Separation Logic did not change essentially, except an unbound and arbitrary heap inversion as discussed in section \ref{sec:PartialSpec}. The introduction of a strict normalform allows a linear and local analysis of the heap terms without an eager comparison of still non-matched conjuncts.

As mentioned in section \ref{sec:ConjunctionAndDisjunction} there arises the question of inconsistency, whether in remote parts of the same specification, for instance, somewhere up or down relative to the current abstract predicate calling stack, there are in fact two identical heaplets or if there are any two pointers with the same location multiple times. The reason beyond is non-repetitiveness.

This problem may be resolved in general only dynamically during the verification due to the undecidability of abstract predicates due to the undecidability of the Halting-problem. Hence a stack-based analysis for processing abstract predicates is needed very similar to the operational semantics provided by Warren \cite{warren83} including processing of symbols and back-references to the stack parameters, except that abstract predicates require an adapted reasoning control (see \cite{haberland14}).
Applying Warren's approach to strict heap conjunction and disjunction will decide $\forall a.a \circ a$. Because of ungrounded symbols, a memoizer may not cope with global states in abstract predicates.

Prolog is a general purpose logical programming language \cite{Sterling94}. We strongly believe Prolog may be used to reason about extended heap terms and abstract predicates in order to resolve key heap verification problems, such as expressibility restrictions \cite{haberland14}.
One advantage of Prolog predicates over classic \textit{one-way} functions (as used in \cite{parkinsonThesis05}, for instance) is that input and output terms may be considered as relation, unioning exponentially many different combinations of input and output vectors, skipping only those combinations where a relation is not defined. This is often the case when the corresponding one-way function is either non-invertible, contains cuts or arithmetic evaluations (e.g. by using the built-in predicate \texttt{is}). There must be a strong correlation between input and output vectors, s.t. a bijective mapping exists between both vectors for the most common definition.
Besides, Prolog predicates containing cuts may always be rewritten w.l.o.g. cut-free. Predicates containing \texttt{is} may be rewritten without as ground term, e.g. a Church number, as long as it can be represented as unifiable term, which is feasible even if not too elegantly.

The ``\textit{Object Constraint Language}'' (OCL) \cite{oclspec} is a specification language for class-instantiated objects in companion with the ``\textit{Unified Modeling Language}''. It implements a fragment of the predicate logic, it supports some quantification of variables and supports collection types and ad-hoc polymorphism by sub-classing. It allows life-cycle specification of objects and class methods. However, OCL does not know of pointer constraints like aliasing.
If pointer constraints were added to OCL and propagated down to code generation when compiling user code, for instance, then code performance could significantly improve (compare with \cite{haberland14}).
In combination with abstract predicates the proposed modification of this paper may be used as proposition for an update of the recent OCL recommendation w.r.t. the intrinsic points-to model, particularly referring here to conventions \ref{conv:RestrictedObjects}, and definitions \ref{def:HeapTermDefinition} and \ref{def:HeapTermExtendedDefinition}.
Presumably, this would also raise expressibility, abstraction and higher modularity. Previous attempts to demonstrate applications of Separation Logic to Design Patterns can be found for instance in \cite{parkinsonThesis05}.


\section{Conclusions}

At the beginning, the problem of verifying dynamic heap was introduced. Related problems and the benefit of the heap separation model were provided. The description of points-to assertions corresponds to heap-manipulating program statements -- in the chosen model the generated heap graph is described edge-wise. The problem with the $\star$-operator was that it may be used syntactically and semantically in two different ways: for heap disjunction, but also for heap conjunction. This caused the described issues. The introduction of two strict heap operations allows heap terms to be interpreted simple. In addition to this, partial object specifications make it promising to understand and control better completeness w.r.t. incoming heaps in heap formulae. Properties of both operations were investigated and found restrictive, but still flexible enough for expressing arbitrary heap graphs. The integration of derived rules to a SMT-solver requires further research. Finally an extension of the current OCL was proposed.


\section*{Acknowledgement}
 
Some parts of this paper have been prepared within the
scope of project part of the state plan of the Board of Education and Science
of Russia (task \# 2.136.2014/K).

\end{document}